\newcommand{\vect}[1]{\mathbf{#1}}
\newcommand{\RR}{\mathbb{R}}
\newcommand\norm[1]{\left\lVert#1\right\rVert}
\newcommand\abs[1]{\left\lvert#1\right\rvert}
\newcommand\parens[1]{\left(#1\right)}
\newcommand*{\pr}[2][]{\text{Pr}\ifx\\\left[#1\right]\\\else_{#1}\fi \left[#2\right]}
\newcommand*{\EE}[2][]{\mathbb{E}\ifx\\\left[#1\right]\\\else_{#1}\fi \left[#2\right]}
\newcommand\qu{\text{quad}}
\newtheorem{ctr}{ctr}[section]
\newtheorem{theorem}[ctr]{Theorem}
\newtheorem{lemma}[ctr]{Lemma}
\newtheorem{claim}[ctr]{Claim}
\newtheorem{defin}[ctr]{Definition}
\newtheorem{question}[ctr]{Question}
\theoremstyle{definition}
\newtheorem{remark}[ctr]{Remark}
\begin{document}
	\begin{center}
		{\Large \textbf{Strictly Proper Contract Functions Can Be Arbitrage-Free}}
		
		\vspace{.2cm}
		Eric Neyman, Tim Roughgarden
		
		\today
	\end{center}
	
\begin{abstract}
	We consider mechanisms for truthfully eliciting probabilistic predictions from a group of experts. The standard approach --- using a proper scoring rule to separately reward each expert --- is not robust to collusion: experts may collude to misreport their beliefs in a way that guarantees them a larger total reward no matter the eventual outcome. Chun and Shachter (2011) termed any such collusion ``arbitrage" and asked whether there is any truthful elicitation mechanism that makes arbitrage impossible. We resolve this question positively, exhibiting a class of strictly proper arbitrage-free contract functions. These contract functions have two parts: one ensures that the total reward of a coalition of experts depends only on the average of their reports; the other ensures that changing this average report hurts the experts under at least one outcome.
\end{abstract}

\section{Introduction}
Suppose that some person or entity (the \emph{principal}) wishes to elicit a probabilistic forecast from an expert: for example, a local news organization may want to ask a meteorologist for the probability that it will rain tomorrow. The principal could incentivize the expert with a payment scheme that depends on the expert's report (in our example, the meteorologist's stated probability) and the eventual outcome (whether or not it rains). Such a payment scheme is called a \emph{scoring rule}, and a scoring rule is called \emph{proper} if the expert's optimal strategy for maximizing expected payment is to report their belief. A scoring rule is called \emph{strictly proper} if this is the expert's unique optimal strategy, no matter their belief. The space of all strictly proper scoring rules is very large, but the most well-known and frequently used one is Brier's quadratic scoring rule \cite{brier50}, which gives an expert who reports $p$ a reward of $1 - 2(1 - p)^2$ if the event happens and $1 - 2p^2$ if it does not. This can be thought of as penalizing the expert based on their report's distance to the ``right answer" (see Equation~\ref{eq:qu_norm} in Section~\ref{sec:prelims}). The logarithmic scoring rule \cite{good52}, which gives a reward of $\ln(p)$ if the event happens and $\ln(1 - p)$ if it does not, is also strictly proper and well known.

In many settings, the principal may wish to elicit forecasts from multiple experts, so as to get a better sense of expert opinion and the extent to which there is a consensus. The principal could use the quadratic scoring rule (or any other strictly proper scoring rule) to elicit each expert's forecast. If experts are not allowed to collude, then this strategy is incentive-compatible; however, \cite{french83} observed that experts can collude in a way that increases the sum total profit of all experts, no matter the final outcome. For example, if three experts believe that there is a 40\%, 50\%, and 90\% chance of rain and are rewarded with the quadratic scoring rule, then their total reward is $0.28+0.5+0.98=1.76$ if it rains and $0.68+0.5-0.62=0.56$ if it does not; but if they collude to all report 60\% then their reward is $0.68+0.68+0.68=2.04$ if it rains and $0.28+0.28+0.28=0.84$ if it does not. The experts can agree beforehand to a redistribution of their rewards in such a way that each expert is guaranteed to be better off than if they had not colluded.

Chun and Shachter \cite{cs11} called this phenomenon --- in which experts collude to misreport in a way where their total reward is larger no matter the outcome --- \emph{arbitrage}. They showed that every strictly proper scoring rule admits arbitrage --- indeed, that there is an arbitrage opportunity for any group of experts so long as they do not all agree on the probability of the event. Specifically, a coalition of experts can risklessly make a profit by deviating to report an aggregate of their beliefs (in the case of the quadratic scoring rule, this aggregate is the arithmetic mean). Recent work by Neyman and Roughgarden \cite{nr21} extend this observation to probability distributions over more than two possible outcomes.\\

For many reasons, the expert may wish to make arbitrage impossible. First, the principal may wish to know whether the experts are in agreement: if they are not, for instance, the principal may want to elicit opinions from more experts. If the experts collude to report an aggregate value (as in our example), the principal does not find out whether they originally agreed. Second, even if the expert only seeks to act based on some aggregate of the experts' opinions, their method of aggregation may be different from the one that experts use to collude. For instance, the principal may have a private opinion on the trustworthiness of each expert and wishes to average the experts' opinions with corresponding weights. Collusion among the experts denies the principal this opportunity. Third, a principal may wish to track the accuracy of each individual expert (to figure out which experts to trust more in the future, for instance), and collusion makes this impossible. Fourth, the space of collusion strategies that constitute arbitrage is large. In our example above, any report in $[0.546, 0.637]$ would guarantee a profit; and this does not even mention strategies in which experts report different probabilities. As such, the principal may not even be able to recover basic information about the experts' beliefs from their reports.

As we have discussed, preventing arbitrage is impossible if the principal chooses a strictly proper scoring rule and uses it to reward all of the experts. However, the principal has more freedom than this: they may choose to make each expert's reward depend not only on that expert's report but also \emph{other} experts' reports. Chun and Shachter ask whether there is any mechanism for rewarding experts that makes arbitrage impossible, concluding that this ``seems unlikely" \cite[\S5]{cs11}. Freeman et al. explore this question further, proposing a mechanism that prevents arbitrage if the experts' reports are guaranteed to be in the range $[\epsilon, 1 - \epsilon]$ for some positive $\epsilon$ (though their mechanism may require very large payments if $\epsilon$ is small) \cite{fppw20}. However, they leave open Chun and Shachter's question of whether an incentive-compatible, arbitrage-free reward mechanism exists.

We resolve this question in the affirmative by exhibiting a class of incentive-compatible mechanisms in which arbitrage from collusion is impossible. Our mechanism takes inspiration from Brier's quadratic scoring rule, but modifies it to take into account the aggregate performance of the remaining experts.

\section{Related Work}
Freeman et al. the question of whether strictly proper arbitrage-free mechanisms exist by proving positive results under different relaxations of these constraints \cite{fppw20}. Their main result is a strictly proper arbitrage-free mechanism under the restriction that the range of allowed reports is restricted to $[\epsilon, 1 - \epsilon]$. However, their mechanism necessitates payments that are exponentially large in $1/\epsilon$. Alternatively, these payments can be scaled down, but at the expense of giving essentially zero reward to each expert on the vast majority of the interval of possible reports, thus providing little incentive for truthful reporting. They also exhibit a positive result if the properness criterion is somewhat relaxed to allow for some contract functions that are proper but not strictly proper.

Chen et al. explore the different but related topic of \emph{arbitrage-free wagering mechanisms} \cite{cdpv14}. In a wagering mechanism, each expert wagers a certain amount of money along with their report, and the pool of wagers is redistributed among the experts depending on each expert's report and wager and the eventual outcome. In this setting, they define arbitrage as any opportunity for an individual to risklessly make a profit. That is, an arbitrage opportunity is one in which an expert may unilaterally deviate by submitting a report that guarantees a profit no matter the final outcome. This differs from Chun and Shachter's definition of arbitrage, which is concerned with riskless profit opportunities stemming from collusion between experts.

The most well-known wagering mechanism is the \emph{weighted score wagering mechanism}, which rewards each expert based on their performance compared to other experts according to a strictly proper scoring rule. An expert may risklessly profit from a weighted score wagering mechanism by reporting an aggregate of other experts' reports. This is the same aggregate as the one that a coalition of experts who are rewarded with a strictly proper scoring rule may report in order to risklessly make a profit in our setting. Chen et al. define \emph{no-arbitrage wagering mechanisms}, which modify the reallocation rule of weighted score wagering mechanisms to reward each expert based on their performance relative to the performance of the aggregate of all other experts' reports \cite[\S4.1]{cdpv14}. Our mechanism mechanism and theirs share some of the same spirit, but are different mechanisms that solve different problems.

\section{Preliminaries} \label{sec:prelims}
We consider an event with $n$ possible outcomes. The space of possible probability distributions over these outcomes is $\Delta_n$, the standard simplex in $\RR^n$. We denote the vertices of this simplex by $\delta_j$ for $j \in [n]$ ($\delta_j$ is the vector whose $j$-th coordinate is $1$ and all of whose other coordinates are $0$).

\paragraph{Scoring rules} A scoring rule is any function that takes as input a probability distribution over the $n$ possible outcomes, and the eventual outcome, and outputs a reward. Formally, a scoring rule is any function $s: \Delta_n \times [n] \to \RR$; if an expert reports probability distribution $\vect{p} \in \Delta_n$ and the outcome is $j$, then the expert receives reward $s(\vect{p}; j)$.

A scoring rule is \emph{proper} if an expert with belief $\vect{b}$ maximizes their expected reward by reporting $\vect{b}$. Formally, $s$ is proper if for all $\vect{b}$, $\sum_j b_j s(\vect{x}; j)$ is maximized at $\vect{x} = \vect{b}$. We say that $s$ is \emph{strictly proper} if for all $\vect{b}$, $\vect{x} = \vect{b}$ is the unique maximizer, i.e. that an expert does strictly worse by misreporting their belief.

\emph{Brier's quadratic scoring rule} --- mentioned in the introduction --- is the scoring rule
\[s_\qu(\vect{p}; j) := 1 - (1 - p_j)^2 - \sum_{\ell \neq j} p_\ell^2.\]
It can be rewritten as
\begin{equation} \label{eq:qu_norm}
	s_\qu(\vect{p}; j) = 1 - \norm{\vect{p} - \delta_j}_2^2.
\end{equation}
Thus, the quadratic scoring rule can be thought of as penalizing the expert by the squared distance between their report and the ``omniscient" answer $\delta_j$. The quadratic scoring rule is strictly proper. In the case of $n = 2$ outcomes, the quadratic scoring rule can be written\footnote{The quadratic scoring rule for two outcomes is more frequently written as $1 - (1 - p_j)^2$, but we choose to include a factor of $2$ to be consistent with the usual formula for $n > 2$ outcomes. The usual form is the same as ours up to a positive affine transformation; such transformations preserve strict properness.} as
\[s_\qu(\vect{p}; j) = 1 - 2(1 - p_j)^2.\]

\paragraph{Contract functions}
Contract functions, defined by Chun and Shachter \cite{cs11}, generalize scoring rules to multiple experts. We say that there are $m$ experts; for $i \in [m]$, expert $i$ reports a probability distribution $\vect{p}_i \in \Delta_n$. We denote the $j$-th coordinate of $\vect{p}_i$ as $p_{i, j}$.

A \emph{contract function} is any function that takes as input the $m$ experts' reports and the outcome, and outputs the reward of each expert. Formally, a contract function is any function $\Pi: (\Delta_n)^m \times [n] \to \RR^m$; if the experts report distributions $\vect{p}_1, \dots, \vect{p}_m$ and the outcome is $j$, then the vector of expert rewards is $\Pi(\vect{p}_1, \dots, \vect{p}_m; j)$. We let $\Pi_i(\cdot)$ denote the $i$-th coordinate of $\Pi(\cdot)$, i.e. expert $i$'s reward. We will generally use $\vect{P}$ to denote the $m$-tuple of reports $(\vect{p}_1, \dots, \vect{p}_m)$.

A contract function is \emph{proper} if for each $i \in [m]$, expert $i$ maximizes their expected reward by reporting their belief $\vect{b}_i$, no matter the reports $\vect{p}_{-i}$ of the other experts. Formally, $\Pi$ is proper if for all $i \in [m]$, for all $\vect{b}_i$ and all $\vect{p}_{-i}$, $\sum_j b_{i, j} \Pi_i(\vect{x}, \vect{p}_{-i}; j)$ is maximized at $\vect{x} = \vect{b}_i$. We say that $\Pi$ is \emph{strictly proper} if $\vect{x} = \vect{b}_i$ is the unique maximizer, i.e. that an expert does strictly worse by misreporting their belief.

Our goal is to exhibit a strictly proper contract function that does not permit arbitrage from collusion. We use the definition of arbitrage given by Freeman et al. \cite{fppw20}, which was adapted from Chun and Shachter \cite{cs11}.

A contract function $\Pi$ \emph{admits arbitrage} if there is a coalition (i.e. subset) $C \subseteq [m]$ of experts and $m$-tuples of expert reports $\vect{P}$ and $\vect{Q}$, with $\vect{p}_i = \vect{q}_i$ for all $i \not \in C$, such that
\[\sum_{i \in C} \Pi_i(\vect{Q}; j) \ge \sum_{i \in C} \Pi_i(\vect{P}; j)\]
for all $j \in [n]$, and the inequality is strict for some $j$. We say that $\Pi$ is \emph{arbitrage-free} if it does not admit arbitrage. Intuitively, $\Pi$ admits arbitrage if it is possible for a coalition of experts to collude to misreport their values in such a way that the total reward of the experts in the coalition ends up larger, no matter the outcome. (Above, the misreport is $\vect{Q}$; the constraint that $\vect{p}_i = \vect{q}_i$ for $i \not \in C$ means that only experts in $C$ change their reports.) If this is possible, then the experts in $C$ can commit beforehand to a redistribution of the extra reward in a way that makes every expert in the coalition better off no matter the eventual outcome $j$.

\begin{remark}
	Positive affine transformations preserve both strict properness and arbitrage-freeness. That is, if $\Pi$ is strictly proper then so is $a\Pi + b$ for any $a > 0$ and $b$, and this is likewise true for arbitrage-freeness.
\end{remark}

The question posed by Chun and Shacther \cite{cs11} and explored by Freeman et al. \cite{fppw20}, which we answer affirmatively in this work, is: \textbf{Does there exist a strictly proper arbitrage-free contract function?}\\

In the case of $m = 2$ experts, there is a fairly straightforward solution:
{ \begin{equation} \label{eq:solution_m2}
		\Pi(\vect{p}_1, \vect{p}_2; j) = \parens{s_\qu(\vect{p}_1; j) - s_\qu(\vect{p}_2; j), s_\qu(\vect{p}_2; j) - s_\qu(\vect{p}_1; j)}.
\end{equation}}%
This contract function is strictly proper because expert 1's reward is the (strictly proper) quadratic score of their report plus a term that does not depend on their report, and likewise for expert 2. It is arbitrage-free because the total reward of the two experts is $0$ no matter what. Indeed, this contract function is arbitrage-free with any strictly proper scoring rule in place of the quadratic scoring rule.

This idea does not extend to $m > 2$ experts, because an arbitrage-free contract function must not admit arbitrage by a coalition of experts of any size. While it is easy to construct a contract function that does not admit arbitrage by a coalition of size $m$ (by making the total reward always equal to $0$), this does not automatically make the contract function free of arbitrage opportunities for coalitions of sizes between $2$ and $m - 1$. In the next section we address this challenge and exhibit a strictly proper contract function that is arbitrage-free for $m > 2$ experts.

\section{A Class of Strictly Proper Arbitrage-Free Contract Functions} \label{sec:mechanism}
Suppose that --- as before --- there are $m \ge 2$ experts who are forecasting an event with $n \ge 2$ outcomes. Given experts with reports $\vect{P} = (\vect{p}_1, \dots, \vect{p}_m)$ and a nonempty subset $S \subseteq [m]$ of the experts, we will let $\overline{\vect{p}}_S := \frac{1}{\abs{S}} \sum_{i \in S} \vect{p}_i$ be the average of the experts' reports. We will use $\overline{\vect{p}}_{-i}$ to denote $\overline{\vect{p}}_{[m] \setminus \{i\}}$.

We now state our main theorem, which exhibits a class of strictly proper, arbitrage-free contract functions.

\begin{theorem} \label{thm:construction}
	Let $\alpha$ be a real number such that $\alpha < 0$ or $\alpha \ge 2(m - 1)^2n$. Let $\Pi$ be the contract function defined by
	\[\Pi_i(\vect{P}; j) = s_\qu(\vect{p}_i; j) - (m - 1)^2 s_\qu(\overline{\vect{p}}_{-i}; j) + \alpha \overline{\vect{p}}_{-i, j}\]
	for each $i$, $j$. Then $\Pi$ is strictly proper and arbitrage-free.
\end{theorem}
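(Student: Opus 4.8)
The plan is to establish strict properness and arbitrage‑freeness in turn; the former is routine and the latter is where the work lies.

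\emph{Strict properness, and singleton coalitions.} For a fixed expert $i$ and fixed reports $\vect p_{-i}$, the vector $\overline{\vect p}_{-i}$ does not involve $\vect p_i$, so $\vect x \mapsto \Pi_i(\vect x, \vect p_{-i}; j)$ is $s_\qu(\vect x; j)$ plus a term independent of $\vect x$; hence $\sum_j b_{i,j}\Pi_i(\vect x, \vect p_{-i}; j)$ is $\sum_j b_{i,j}s_\qu(\vect x; j)$ plus a constant, and is uniquely maximized at $\vect x = \vect b_i$ since $s_\qu$ is strictly proper. This also kills arbitrage by a coalition of size one: if $\Pi_i(\vect q_i, \vect p_{-i}; j) \ge \Pi_i(\vect p_i, \vect p_{-i}; j)$ for all $j$, then averaging these inequalities against the belief $\vect p_i$ gives $\sum_j p_{i,j}\Pi_i(\vect q_i, \vect p_{-i}; j) \ge \sum_j p_{i,j}\Pi_i(\vect p_i, \vect p_{-i}; j)$, so strict properness forces $\vect q_i = \vect p_i$ and no outcome is strictly better. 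Thus it suffices to consider a coalition $C$ with $c := \abs{C} \ge 2$.

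\emph{Reducing to a statement about averages.} Fix such a $C$ and reports $\vect P$, and set $T := \sum_{k \notin C}\vect p_k$, $\vect a := \overline{\vect p}_C$, and $\vect w := \sum_k \vect p_k = c\vect a + T$, so that $\overline{\vect p}_{-i} = \tfrac{1}{m-1}(\vect w - \vect p_i)$. I would substitute $s_\qu(\vect p; j) = 2p_j - \norm{\vect p}_2^2$ into $\sum_{i \in C}\Pi_i(\vect P; j)$ and expand $\norm{\vect w - \vect p_i}_2^2$; the point is that the $-\sum_{i\in C}\norm{\vect p_i}_2^2$ from the individual scores cancels exactly against the $+\sum_{i\in C}\norm{\vect p_i}_2^2$ produced by $(m-1)^2\sum_{i\in C}\norm{\overline{\vect p}_{-i}}_2^2$, leaving
\[
\sum_{i\in C}\Pi_i(\vect P; j) = \lambda\, a_j + c^2(c-2)\norm{\vect a}_2^2 + 2c(c-1)\langle \vect a, T\rangle + \psi_j(T), \qquad \lambda := c\Bigl(2 + \tfrac{c-1}{m-1}\bigl(\alpha - 2(m-1)^2\bigr)\Bigr),
\]
where $\psi_j(T)$ is independent of the reports of members of $C$. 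So the coalition's total reward is a function of $(\overline{\vect p}_C, T, j)$ alone; since the experts outside $C$ keep their reports fixed, $\Pi$ admits arbitrage via $C$ precisely when there are $\vect a, \vect a' \in \Delta_n$ and a vector $T$ with nonnegative entries summing to $m-c$ with $f_j(\vect a) \ge f_j(\vect a')$ for all $j$ and strict inequality for some $j$, where $f_j(\vect a) := \lambda a_j + g(\vect a)$ and $g(\vect a) := c^2(c-2)\norm{\vect a}_2^2 + 2c(c-1)\langle\vect a, T\rangle$.

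\emph{Ruling out the arbitrage.} Put $\vect d := \vect a - \vect a'$ and $\mu := g(\vect a) - g(\vect a')$, so $f_j(\vect a) - f_j(\vect a') = \lambda d_j + \mu$ with $\mu = c^2(c-2)\bigl(\norm{\vect a}_2^2 - \norm{\vect a'}_2^2\bigr) + 2c(c-1)\langle\vect d, T\rangle$, and note $\sum_j d_j = 0$ and $\sum_j T_j = m-c$. For $c \ge 2$ the hypothesis on $\alpha$ gives $\lambda < 0$ when $\alpha < 0$ and $\lambda > 0$ when $\alpha \ge 2(m-1)^2 n$. Assume for contradiction that $\lambda d_j + \mu \ge 0$ for all $j$ with $\vect d \ne \vect 0$; summing over $j$ gives $\mu \ge 0$, and if $\mu = 0$ then every $d_j$ shares the sign of $\lambda$ and so vanishes, whence $\mu > 0$. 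If $\lambda < 0$: from $\mu \ge \abs{\lambda} d_j$, multiplying by the nonnegative numbers $a_j$ and by $T_j$ and summing gives $\langle\vect a, \vect d\rangle \le \mu/\abs{\lambda}$ and $\langle\vect d, T\rangle \le (m-c)\mu/\abs{\lambda}$; feeding these and $\norm{\vect a}_2^2 - \norm{\vect a'}_2^2 = 2\langle\vect a, \vect d\rangle - \norm{\vect d}_2^2$ into the formula for $\mu$, using $c^2(c-2), 2c(c-1) \ge 0$, yields $\mu + c^2(c-2)\norm{\vect d}_2^2 \le \bigl(2c^2(c-2) + 2c(c-1)(m-c)\bigr)\mu/\abs{\lambda}$, so $\abs{\lambda} \le 2c^2(c-2) + 2c(c-1)(m-c) = 2c\bigl((c-1)m - c\bigr)$, contradicting the bound $\abs{\lambda} > 2c\bigl((c-1)m - c\bigr)$ that $\alpha < 0$ forces. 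If $\lambda > 0$: from $d_j \ge -\mu/\lambda$ and $\sum_j d_j = 0$ one gets $\norm{\vect d}_\infty \le (n-1)\mu/\lambda$, so by H\"older $\abs{\mu} \le \bigl(2c^2(c-2) + 2c(c-1)(m-c)\bigr)\norm{\vect d}_\infty \le \bigl(2c^2(c-2) + 2c(c-1)(m-c)\bigr)(n-1)\mu/\lambda$, so $\lambda \le (n-1)\cdot 2c\bigl((c-1)m - c\bigr)$, contradicting $\lambda \ge 2c\bigl(1 + (c-1)(m-1)(n-1)\bigr) = 2c\bigl((n-1)((c-1)m - c) + n\bigr)$, which $\alpha \ge 2(m-1)^2 n$ forces. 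In either case $\vect d = \vect 0$, so no outcome is strictly improved and $\Pi$ is arbitrage‑free.

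\emph{The main obstacle.} The delicate case is $\lambda > 0$, i.e.\ $\alpha \ge 2(m-1)^2 n$: the arbitrage inequalities give only lower bounds on $\langle\vect a, \vect d\rangle$ and $\langle\vect d, T\rangle$, forcing one back to the lossy estimate $\norm{\vect d}_\infty \le (n-1)\mu/\lambda$, and the factor $n-1$ lost there is exactly what makes the threshold scale with $n$; checking that the final inequality closes reduces to the identity $1 + (c-1)(m-1)(n-1) = (n-1)\bigl((c-1)m - c\bigr) + n$, which is what pins down the constant $2(m-1)^2 n$. The only other point needing care is spotting the cancellation of the $\norm{\vect p_i}_2^2$ terms, without which the coalition's reward would not be a function of the average report alone.
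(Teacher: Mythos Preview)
Your proof is correct and follows essentially the same route as the paper's. Both arguments hinge on the same key computation---that $\sum_{i\in C}\Pi_i(\vect P;j)$ depends on the reports in $C$ only through their average $\vect a=\overline{\vect p}_C$---and then bound the change in this quantity using the same ingredients: the simplex constraints $\sum_j d_j=0$, $\sum_j a_j=1$, $\sum_j T_j=m-c$, and (in the $\alpha\ge 2(m-1)^2n$ case) the $\ell_\infty$ estimate $\|\vect d\|_\infty\le (n-1)\mu/\lambda$. The packaging differs slightly: the paper parametrizes by $d=m-1-\frac{\alpha}{2(m-1)}$ and, for each case, exhibits an explicit outcome $\tilde j$ (the argmax of the appropriate coordinate of $\pm\vect d$) at which the deviation strictly hurts, whereas you parametrize by $\lambda=2c(1-(c-1)d)$ and argue by contradiction, assuming all outcomes weakly improve and feeding the resulting bounds on $\langle\vect a,\vect d\rangle$ and $\langle\vect d,T\rangle$ back into the formula for $\mu$ to contradict the size of $|\lambda|$. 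These are dual presentations of the same inequality chain, and your final algebraic identity $1+(c-1)(m-1)(n-1)=(n-1)((c-1)m-c)+n$ is exactly the computation that closes the paper's Case~2.
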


Note that in the case of $m = 2$, setting $\alpha = 0$ yields our aforementioned solution for two experts in Equation~\ref{eq:solution_m2}. Unfortunately, setting $\alpha = 0$ for $m > 2$ experts causes arbitrage-freeness to fail in certain edge cases.\\

One can think of the contract function in Theorem~\ref{thm:construction} as having two parts. The first part, $s_\qu(\vect{p}_i; j) - (m - 1)^2 s_\qu(\overline{\vect{p}}_{-i}; j)$, ensures that any coalition's total reward depends only on the average of the coalition's reports. In effect this significantly limits the degrees of freedom that a coalition has when colluding. The second part, $\alpha \overline{\vect{p}}_{-i, j}$, ensures that any deviation in this average report causes a decrease in total reward under at least one outcome.

In this section we focus on proving Theorem~\ref{thm:construction} for $n = 2$ outcomes, as this allows us to simplify notation without sacrificing the core ideas. We defer the proof for general values of $n$ to the appendix.

\begin{proof}[Proof of Theorem~\ref{thm:construction} for $n = 2$]
	First, note that $\Pi$ is strictly proper, because expert $i$'s reward is their quadratic score plus a term that does not depend on their report. It remains to show that $\Pi$ is arbitrage-free.
	
	Let $C \subseteq [m]$ be a coalition of experts. Strict properness entails that no expert can unilaterally find an arbitrage opportunity, so we may assume that $\abs{C} \ge 2$.\\
	
	For an outcome $j$ and a subset $S \subseteq [m]$, let $p_{S, j} := \sum_{i \in S} p_{i, j}$. The following fact follows from algebraic manipulations, which we defer to the appendix.
	\begin{lemma} \label{lem:pmx}
		The expression for $\Pi_i(\vect{P}; j)$ is equal to
		\begin{equation} \label{eq:adapted_n2}
			2(p_{[m], j} - d - 1)(p_{[m], j} - 2p_{i, j} - d + 1) + f(m, \alpha),
		\end{equation}
		for some function $f$, where $d = m - 1 - \frac{\alpha}{4(m - 1)}$.
	\end{lemma}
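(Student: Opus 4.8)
The lemma is a polynomial identity, and the plan is to verify it by direct expansion, using coefficient matching to pin down the constant $d$. Fix the outcome $j$ and, for brevity, write $x := p_{i,j}$ and $T := p_{[m],j}$, so that $\overline{\vect{p}}_{-i,j} = \tfrac{1}{m-1}(T - x)$.

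First I would substitute the two-outcome formula $s_\qu(\vect{p}; j) = 1 - 2(1 - p_j)^2$ into the definition of $\Pi_i$. The one genuinely convenient cancellation is that the factor $(m-1)^2$ multiplying $s_\qu(\overline{\vect{p}}_{-i}; j)$ annihilates the $\tfrac{1}{(m-1)^2}$ that appears when $\overline{\vect{p}}_{-i,j}$ is squared:
\[(m-1)^2\, s_\qu(\overline{\vect{p}}_{-i}; j) = (m-1)^2 - 2\big((m-1) - T + x\big)^2.\]
This leaves
\[\Pi_i(\vect{P}; j) = 1 - 2(1 - x)^2 - (m-1)^2 + 2\big((m-1) - T + x\big)^2 + \frac{\alpha}{m-1}(T - x).\]

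Next I would expand the two squares. The key observation is that the coefficient of $x^2$ is $-2$ coming from $-2(1-x)^2$ and $+2$ coming from $2\big((m-1)-T+x\big)^2$, so the quadratic-in-$x$ terms cancel; what survives is linear in $x$, quadratic in $T$, and bilinear in $xT$, with coefficients depending on $m$ and $\alpha$. Collecting terms puts $\Pi_i(\vect{P}; j)$ into the form $2T^2 - 4xT + (\text{coeff})\,T + (\text{coeff})\,x + (\text{const})$.

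Finally I would expand the target $2(T - d - 1)(T - 2x - d + 1)$ and match. The $T^2$ and $xT$ coefficients ($2$ and $-4$ respectively) agree no matter what $d$ is; matching the coefficient of $T$ forces $-4d = -4(m-1) + \tfrac{\alpha}{m-1}$, i.e. $d = m - 1 - \tfrac{\alpha}{4(m-1)}$, and one then checks that with this value the coefficient of $x$ matches as well. Whatever remains is independent of the reports $\vect{P}$ and of $j$, so it may be collected into $f(m,\alpha)$ (explicitly, $f(m,\alpha) = (m-1)^2 + 1 - 2d^2$). There is no real obstacle beyond careful bookkeeping: the only two steps requiring any thought are noticing the cancellation of the $x^2$ terms and recognizing that the prescribed value of $d$ is precisely what coefficient-matching demands.
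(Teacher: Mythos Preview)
Your argument is correct. The approach differs slightly from the paper's, though both are elementary. You expand everything in the variables $x=p_{i,j}$ and $T=p_{[m],j}$, observe the $x^2$ cancellation, and then \emph{match coefficients} with the expanded target to determine $d$; the paper instead keeps the expression in the form $-2(1-p_{i,j})^2 + 2(m-1-p_{-i,j})^2 + \frac{\alpha}{m-1}p_{-i,j} + \text{const}$, \emph{completes the square} in $p_{-i,j}$ to absorb the linear $\alpha$-term (this is precisely where $d=m-1-\frac{\alpha}{4(m-1)}$ appears), and then applies the \emph{difference of squares} factorization $2B^2-2A^2=2(B+A)(B-A)$ with $A=1-p_{i,j}$ and $B=d-p_{-i,j}$. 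The paper's route explains more transparently \emph{why} the expression factors and why $d$ takes its particular value, while your coefficient-matching route is more mechanical but equally valid and has the virtue of making it explicit that the constant $f(m,\alpha)=(m-1)^2+1-2d^2$ really is independent of $\vect{P}$ and $j$.
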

	
	Equation~\ref{eq:adapted_n2} makes it evident that rewards add nicely across experts in a coalition $C$, as the first term of the product is the same for all experts in $C$. We will use the notation $\Pi_C(\vect{P}; j)$ to denote $\sum_{i \in C} \Pi_i(\vect{P}; j)$. The key idea is that, as we are about to show, if the reports of experts not in $C$ are held fixed, $\Pi_C(\vect{P}; j)$ depends \emph{only} on $p_{C, j}$. Thus, the experts in $C$ have only one degree of freedom available for colluding: the sum of their reports.
	
	We write $\overline{C}$ to mean $[m] \setminus C$. We have
	{
		\begin{align} \label{eq:pi_c_cong}
			\Pi_C(\vect{P}; j) &= 2 \sum_{i \in C} (p_{[m], j} - d - 1)(p_{[m], j} - 2p_{i, j} - d + 1) + \abs{C} f(m, \alpha) \nonumber\\
			&= 2(p_{C, j} + p_{\overline{C}, j} - d - 1) ((\abs{C} - 2)p_{C, j} + \abs{C}(p_{\overline{C}, j} - d + 1)) + \abs{C} f(m, \alpha) \nonumber\\
			&= 2((\abs{C} - 2) p_{C, j}^2 + 2((\abs{C} - 1)(p_{\overline{C}, j} - d) + 1)p_{C, j}) + g(m, \alpha, \abs{C}, p_{\overline{C}, j}),
		\end{align}
	}%
	for some function $g$. Now, recall the constraints on $\alpha$ in Theorem~\ref{thm:construction}, and note that $\alpha < 0 \Leftrightarrow d > m - 1$ and $\alpha \ge 4(m - 1)^2 \Leftrightarrow d \le 0$. With this in mind, we now prove the following claim, which is sufficient to complete our proof.
	
	\begin{claim} \label{claim:monotonicity}
		If $d \le 0$, then for each $j$ and for all possible reports of experts not in $C$, $\Pi_C(\vect{P}; j)$ is a strictly increasing function of $p_{C, j}$. If $d > m - 1$, it is a strictly decreasing function of $p_{C, j}$.
	\end{claim}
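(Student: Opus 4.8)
The plan is to fix the reports $\vect{p}_i$ of all experts $i \notin C$ and view $\Pi_C(\vect{P};j)$ as a function of the single real variable $x := p_{C,j}$; this is legitimate precisely because Equation~\ref{eq:pi_c_cong} shows that the coalition's total reward depends on the reports of members of $C$ only through their sum $p_{C,j}$. By Equation~\ref{eq:pi_c_cong}, this function is the quadratic
\[ \varphi(x) = 2(\abs{C}-2)\,x^2 + 4\bigl((\abs{C}-1)(p_{\overline{C},j}-d)+1\bigr)\,x + g(m,\alpha,\abs{C},p_{\overline{C},j}), \]
whose leading coefficient $2(\abs{C}-2)$ is nonnegative because $\abs{C}\ge 2$. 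Hence its derivative
\[ \varphi'(x) = 4(\abs{C}-2)\,x + 4\bigl((\abs{C}-1)(p_{\overline{C},j}-d)+1\bigr) \]
is a weakly increasing affine function of $x$. Since $n=2$, every coordinate $p_{i,j}$ lies in $[0,1]$, so as the reports vary, $x = p_{C,j}$ ranges over $[0,\abs{C}]$ and $p_{\overline{C},j}$ over $[0,m-\abs{C}]$. Because $\varphi'$ is monotone in $x$, it suffices to check the sign of $\varphi'$ at one endpoint of $[0,\abs{C}]$: if $\varphi'(0)>0$ then $\varphi'>0$ on all of $[0,\abs{C}]$, so $\Pi_C$ is strictly increasing in $p_{C,j}$; and if $\varphi'(\abs{C})<0$ then $\varphi'<0$ on all of $[0,\abs{C}]$, so $\Pi_C$ is strictly decreasing in $p_{C,j}$.

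For $d\le 0$: I would evaluate $\varphi'(0) = 4\bigl((\abs{C}-1)(p_{\overline{C},j}-d)+1\bigr)$. Here $p_{\overline{C},j}\ge 0$ and $-d\ge 0$ give $p_{\overline{C},j}-d\ge 0$, and $\abs{C}-1\ge 1$, so $\varphi'(0)\ge 4>0$, as needed.

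For $d> m-1$: I would evaluate $\varphi'(\abs{C}) = 4(\abs{C}-2)\abs{C} + 4\bigl((\abs{C}-1)(p_{\overline{C},j}-d)+1\bigr)$. From $p_{\overline{C},j}\le m-\abs{C}$ and $d>m-1$ we get $p_{\overline{C},j}-d < (m-\abs{C})-(m-1) = 1-\abs{C}$, hence $(\abs{C}-1)(p_{\overline{C},j}-d) < -(\abs{C}-1)^2$. Substituting, $\varphi'(\abs{C}) < 4\bigl((\abs{C}-2)\abs{C} - (\abs{C}-1)^2 + 1\bigr) = 0$, the final equality being a one-line expansion. So $\varphi'(\abs{C})<0$, as needed.

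I expect this argument to be routine once Equation~\ref{eq:pi_c_cong} is available; there is no serious obstacle, only two points that need attention. First, one must use the correct bound $p_{\overline{C},j}\le m-\abs{C}$ rather than the weaker $p_{\overline{C},j}\le 1$: this is exactly what makes the estimate in the $d>m-1$ case tight, and it is why the strict constraints on $\alpha$ in the theorem — equivalently, $d\le 0$ or $d>m-1$ with strict inequality — are needed in place of their weak counterparts. Second, in the degenerate subcase $\abs{C}=2$ the polynomial $\varphi$ is affine and $\varphi'$ is constant in $x$; the endpoint computations above still go through verbatim, so no separate treatment is required.
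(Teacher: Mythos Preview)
Your proof is correct and follows essentially the same approach as the paper: both arguments show that the quadratic $\Pi_C(\vect{P};j)$, viewed as a function of $p_{C,j}$ on $[0,\abs{C}]$, is monotone by checking that its vertex lies outside this interval (you phrase this as a sign check on $\varphi'$ at the appropriate endpoint, the paper computes the vertex location directly). One small advantage of your presentation is that it handles $\abs{C}=2$ and $\abs{C}>2$ uniformly, whereas the paper treats the linear case $\abs{C}=2$ separately; your observation that the endpoint computations go through verbatim when $\varphi'$ is constant is correct.
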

	
	By virtue of deriving Equation~\ref{eq:pi_c_cong}, we have already proven the most difficult part of Claim~\ref{claim:monotonicity}, which is that $\Pi_C(\vect{P}; j)$ is a function of (i.e. determined by) $p_{C, j}$. Why is this function's monotonicity sufficient to complete our proof of Theorem~\ref{thm:construction}? Since $p_{C, 1} + p_{C, 2} = \abs{C}$, it follows from Claim~\ref{claim:monotonicity} that for $d \le 0$ and $d > m - 1$, colluding in a way that increases the total reward in the case of one outcome necessarily decreases it in the case of the other outcome.
	
	\begin{proof}[Proof of Claim~\ref{claim:monotonicity}]
		We first consider the case of $\abs{C} = 2$. In this case we have
		\[\Pi_C(\vect{P}; j) = 4(p_{\overline{C}, j} - d + 1)p_{C, j} + g(m, \alpha, \abs{C}, p_{\overline{C}, j}).\]
		Now, $0 \le p_{\overline{C}, j} \le m - 2$, which means that $1 - d \le p_{\overline{C}, j} - d + 1 \le m - 1 - d$. If $d \le 0$, this quantity is guaranteed to be strictly positive, so $\Pi_C(\vect{P}; j)$ is a strictly increasing function of $p_{C, j}$; if $d > m - 1$, it is guaranteed to be negative, so $\Pi_C(\vect{P}; j)$ is a strictly decreasing function of $p_{C, j}$.\\
		
		Now assume that $\abs{C} > 2$. In this case, it follows from Equation~\ref{eq:pi_c_cong} that $\Pi_C(\vect{P}; j)$ is a parabola with a minimum at
		\[\frac{(\abs{C} - 1)(d - p_{\overline{C}, j}) - 1}{\abs{C} - 2}.\]
		We wish to show that if $d \le 0$ then this quantity is at most $0$, and that if $d > m - 1$ then it is at least $\abs{C}$ (since the range of possible values of $p_{C, j}$ is $[0, \abs{C}]$). If $d \le 0$ then, since $p_{\overline{C}, j} \ge 0$, we have
		\[\frac{(\abs{C} - 1)(d - p_{\overline{C}, j}) - 1}{\abs{C} - 2} \le \frac{-1}{\abs{C} - 2} \le \frac{-1}{m - 2} \le 0.\]
		If $d > m - 1$ then, since $p_{\overline{C}, j} \le m - \abs{C}$, we have
		\[\frac{(\abs{C} - 1)(d - p_{\overline{C}, j}) - 1}{\abs{C} - 2} \ge \frac{(\abs{C} - 1)^2 - 1}{\abs{C} - 2} = \abs{C}.\]
	\end{proof}
	
	\noindent Having proved the claim, we have completed the proof of Theorem~\ref{thm:construction} for $n = 2$.
\end{proof}

We note that setting $\alpha = 0$ results in a contract function that is arbitrage-free except in one edge case: in the event that all but two experts assign a probability of zero to some outcome $j$, the remaining experts can collude to adjust their probabilities --- in particular, lowering the total probability they assign to outcome $j$ --- in a way that increases their total reward under outcome $j$ and leaves the remaining rewards unchanged. If we are willing to put this exception aside (e.g. if we only allow reports strictly between $0$ and $1$), then we may regard the resulting contract function $\Pi_i(\vect{P}; j) = s_\qu(\vect{p}_i; j) - (m - 1)^2 s_\qu(\overline{\vect{p}}_{-i}; j)$ as arbitrage-free. This contract function has a natural interpretation: it rewards an expert for the accuracy of their forecast but penalizes the expert if others are accurate in aggregate. This rule is reminiscent of the no-arbitrage wagering mechanism for the quadratic scoring rule given in \cite{cdpv14}, except that the penalty is multiplied by a factor of $(m - 1)^2$.

\section{Future Directions in Arbitrage-Freeness} \label{sec:stronger_defs}
Having exhibited a strictly proper arbitrage-free contract function, it is natural to ask about other (possibly stronger) notions of arbitrage-freeness. Another natural notion is to say that $\Pi$ admits arbitrage if a coalition $C$ of experts can collude in a way that, in the opinion of every expert in $C$, increases the expected total reward of the experts in $C$. Formally:

\begin{defin} \label{def:exp_arb}
	A contract function $\Pi$ \emph{admits expected arbitrage} if there is a coalition $C \subseteq [m]$ of experts and vectors of reports $\vect{P} = (\vect{p}_1, \dots, \vect{p}_m)$, $\vect{Q} = (\vect{q}_1, \dots, \vect{q}_m)$, with $p_i = q_i$ if $i \not \in C$, such that for all $i \in C$ we have
	\[\sum_{j \in [n]} p_{i, j} \sum_{k \in C} \Pi_k(\vect{P}; j) \le \sum_{j \in [n]} p_{i, j} \sum_{k \in C} \Pi_k(\vect{Q}; j),\]
	and the inequality is strict for some $i$. We say that $\Pi$ is \emph{free of expected arbitrage} if it does not admit expected arbitrage.
\end{defin}

Up to edge scenarios,\footnote{It is possible for a coalition of experts to collude in a way that increases their total reward under an outcome to which they all assign probability $0$. If their reward in the case of all other outcome is unchanged, such a deviation would constitute arbitrage but not expected arbitrage.} if a contract function admits arbitrage then it also admits expected arbitrage. On the other hand, in the case of $m > 2$ experts, the scoring rules described by Theorem~\ref{thm:construction} (which do not admit arbitrage) do admit expected arbitrage. As an example, consider two outcomes and $m$ experts with beliefs $(\frac{1}{2}, \frac{1}{2})$. If all experts report their beliefs, then each expert's reward is $\frac{\alpha}{2} + \frac{1}{2}(1 - (m - 1)^2)$ no matter the outcome. If all experts instead report $(1, 0)$ then each expert expects a reward of $\frac{\alpha}{2}$, which is larger. This raises the following question.

\begin{question}
	Is there a strictly proper scoring rule that does not admit expected arbitrage?
\end{question}

We hope that our work will spur research on stronger notions of arbitrage-freeness, included but not limited to Definition~\ref{def:exp_arb}.

\printbibliography

\appendix
\section{Details Omitted from Section~\ref{sec:mechanism}}
\begin{proof}[Proof of Lemma~\ref{lem:pmx}]
	Let $d = m - 1 - \frac{\alpha}{4(m - 1)}$. We use $f_0$ and $f$ to denote functions whose particular values do not concern us. We have
	{
		\begin{align*}
			\Pi_i(\vect{P}; j) &= s_\qu(\vect{p}_i; j) - (m - 1)^2 s_\qu(\overline{\vect{p}}_{-i}; j) + \alpha \overline{\vect{p}}_{-i,j}\\
			&= 1 - 2(1 - p_{i, j})^2 - (m - 1)^2 \parens{1 - 2\parens{1 - \frac{p_{-i, j}}{m - 1}}^2} + \frac{\alpha}{m - 1} p_{-i, j}\\
			&= -2(1 - p_{i, j})^2 + 2(m - 1 - p_{-i, j})^2 + \frac{\alpha}{m - 1}p_{-i, j} + f_0(m, \alpha)\\
			&= -2(1 - p_{i, j})^2 + 2 \parens{d - p_{-i, j}}^2 + f(m, \alpha)\\
			&= 2 \parens{d - p_{-i, j} + (1 - p_{i, j})} \parens{d - p_{-i, j} - (1 - p_{i, j})} + f(m, \alpha)\\
			&= 2(p_{[m], j} - d - 1)(p_{[m], j} - 2p_{i, j} - d + 1) + f(m, \alpha),
	\end{align*}}%
	as desired.
\end{proof}

\begin{proof}[Proof of Theorem~\ref{thm:construction} for general $n$]
	First, note that $\Pi$ is strictly proper, because expert $i$'s reward is their quadratic score plus a term that does not depend on their report. It remains to show that $\Pi$ is arbitrage-free.
	
	Let $C \subseteq [m]$ be a coalition of experts. Strict properness entails that no expert can unilaterally find an arbitrage opportunity, so we may assume that $\abs{C} \ge 2$.
	
	For an outcome $j$ and a subset $S \subseteq [m]$, let $p_{S, j} := \sum_{i \in S} p_{i, j}$. Let $d = m - 1 - \frac{\alpha}{2(m - 1)}$. We rewrite $\Pi_i(\vect{P}; j)$ as follows, using $f_0$ and $f$ to denote functions whose particular values do not concern us.
	
	{
		\begin{align*}
			&\Pi_i(\vect{P}; j) = s_\qu(\vect{p}_i; j) - (m - 1)^2 s_\qu(\overline{\vect{p}}_{-i}; j) + \alpha \overline{\vect{p}}_{-i,j}\\
			&= (m - 1)^2 \parens{1 - \frac{1}{m - 1}p_{-i, j}}^2 - (1 - p_{i, j})^2 + \sum_{\ell \neq j} \parens{(m - 1)^2 \parens{\frac{1}{m - 1} p_{-i, \ell}}^2 - p_{i, \ell}^2}\\
			&\qquad + \frac{\alpha}{m - 1} p_{-i, j} + f_0(m, n, \alpha)\\
			&= (d - p_{-i, j})^2 - (1 - p_{i, j})^2 + \sum_{\ell \neq j} \parens{p_{-i, \ell}^2 - p_{i, \ell}^2} + f(m, n, \alpha)\\
			&= (d - p_{-i, j} + (1 - p_{i, j}))(d - p_{-i, j} - (1 - p_{i, j})) + \sum_{\ell \neq j} (p_{-i, \ell} + p_{i, \ell})(p_{-i, \ell} - p_{i, \ell}) + f(m, n, \alpha)\\
			&= (p_{[m], j} - d - 1)(p_{[m], j} - 2p_{i, j} - d + 1) + \sum_{\ell \neq j} p_{[m], \ell}(p_{[m], \ell} - 2p_{i, \ell}) + f(m, n, \alpha).
	\end{align*}}%
	
	We will use the notation $\Pi_C(\vect{P}; j)$ to denote $\sum_{i \in C} \Pi_i(\vect{P}; j)$. We also write $\overline{C}$ to mean $[m] \setminus C$ and $\vect{P}_{\overline{C}}$ to mean the collection of reports $\vect{p}_i$ for $i \in \overline{C}$. We have
	
	{
		\begin{align*}
			&\Pi_C(\vect{P}; j) = \sum_{i \in C} \parens{(p_{[m], j} - d - 1)(p_{[m], j} - 2p_{i, j} - d + 1) + \sum_{\ell \neq j} p_{[m], \ell}(p_{[m], \ell} - 2p_{i, \ell})} + \abs{C} f(m, n, \alpha)\\
			&= (p_{C, j} + p_{\overline{C}, j} - d - 1)((\abs{C} - 2)p_{C, j} + \abs{C}(p_{\overline{C}, j} - d + 1))\\
			&\qquad + \sum_{\ell \neq j} (p_{C, \ell} + p_{\overline{C}, \ell})(\abs{C} p_{\overline{C}, \ell} + (\abs{C} - 2) p_{C, \ell}) + \abs{C} f(m, n, \alpha)\\
			&= (\abs{C} - 2) p_{C, j}^2 + ((2\abs{C} - 2)(p_{\overline{C}, j} - d) + 2)p_{C, j}\\
			&\quad + \sum_{\ell \neq j} \parens{(\abs{C} - 2) p_{C, \ell}^2 + (2\abs{C} - 2) p_{\overline{C}, \ell} p_{C, \ell}} + g(m, n, \alpha, \abs{C}, \vect{P}_{\overline{C}})\\
			&= (2 - (2\abs{C} - 2)d) p_{C, j} + \sum_\ell \parens{(\abs{C} - 2) p_{C, \ell}^2 + (2\abs{C} - 2) p_{\overline{C}, \ell} p_{C, \ell}} + g(m, n, \alpha, \abs{C}, \vect{P}_{\overline{C}})
	\end{align*}}%
	
	for some function $g$. Consider a different vector $\vect{Q}$ that agrees with $\vect{P}$ on $\overline{C}$.\\
	
	\textbf{Case 1:} $\alpha < 0$. In this case, $2 - (2\abs{C} - 2)d < 2 - (2\abs{C} - 2)(m - 1)$. Let $\tilde{j} = \arg \max_\ell q_{C, \ell} - p_{C, \ell}$, and let $\epsilon = q_{C, \tilde{j}} - p_{C, \tilde{j}}$. We note that
	\[\sum_\ell q_{C, \ell}^2 - p_{C, \ell}^2 = \sum_\ell (q_{C, \ell} - p_{C, \ell})(q_{C, \ell} + p_{C, \ell}) \le \epsilon \sum_\ell (q_{C, \ell} + p_{C, \ell}) = 2 \epsilon \abs{C}.\]
	Thus, we have
	{ \begin{align*}
			\Pi_C(\vect{Q}; \tilde{j}) - \Pi_C(\vect{P}; \tilde{j}) &= (2 - (2\abs{C} - 2)d) \epsilon + \sum_\ell (\abs{C} - 2)(q_{C, \ell}^2 - p_{C, \ell}^2) + (2\abs{C} - 2)(q_{C, \ell} - p_{C, \ell}) p_{\overline{C}, \ell}\\
			&\le (2 - (2\abs{C} - 2)d) \epsilon + (\abs{C} - 2) \cdot 2 \epsilon \abs{C} + (2\abs{C} - 2)(m - \abs{C}) \epsilon\\
			&\le (2 - (2\abs{C} - 2)(m - 1)) \epsilon + (\abs{C} - 2) \cdot 2 \epsilon \abs{C} + (2\abs{C} - 2)(m - \abs{C}) \epsilon\\
			&= 2\epsilon(1 + (\abs{C} - 2)\abs{C} + (\abs{C} - 1)(1 - \abs{C})) = 0,
	\end{align*}}%
	with equality in the second step only when $\epsilon = 0$, i.e. $q_{C, \ell} = p_{C, \ell}$ for all $\ell$. Thus, either the total reward of the experts in $C$ is the same under $\vect{Q}$ as under $\vect{P}$ for every outcome, or it is strictly smaller under $\vect{Q}$ in the case of outcome $\tilde{j}$.\\
	
	\textbf{Case 2:} $\alpha \ge 2(m - 1)^2 n$. In this case, $2 - (2\abs{C} - 2)d \ge 2 + (2\abs{C} - 2)(m - 1)(n - 1)$. $\tilde{j} = \arg \max_\ell p_{C, \ell} - q_{C, \ell}$, and let $\epsilon = p_{C, \tilde{j}} - q_{C, \tilde{j}}$. Since $\sum_\ell (q_{C, \ell} - p_{C, \ell}) = 0$, it follows that $q_{C, \ell} - p_{C, \ell} \le (n - 1)\epsilon$ for all $\ell$. We note that
	\begin{align*}
		\sum_\ell q_{C, \ell}^2 - p_{C, \ell}^2 &\le \sum_\ell (q_{C, \ell} + p_{C, \ell}) \max(q_{C, \ell} - p_{C, \ell}, 0)\\
		&\le 2\abs{C} \sum_\ell \max(q_{C, \ell} - p_{C, \ell}, 0) \le 2 \abs{C}(n - 1) \epsilon.
	\end{align*}
	We also have that
	\[\sum_\ell (q_{C, \ell} - p_{C, \ell}) p_{\overline{C}, \ell} \le (m - \abs{C}) \sum_\ell \max(q_{C, \ell} - p_{C, \ell}, 0)\le (m - \abs{C})(n - 1)\epsilon.\]
	Therefore,
	\begin{align*}
		&\Pi_C(\vect{Q}; \tilde{j}) - \Pi_C(\vect{P}; \tilde{j}) = -(2 - (2\abs{C} - 2)d) \epsilon + \sum_\ell (\abs{C} - 2)(q_{C, \ell}^2 - p_{C, \ell}^2) + (2\abs{C} - 2)(q_{C, \ell} - p_{C, \ell}) p_{\overline{C}, \ell}\\
		&\le -(2 - (2\abs{C} - 2)d) \epsilon + (\abs{C} - 2) \cdot 2\abs{C}(n - 1)\epsilon + (2\abs{C} - 2) \cdot (m - \abs{C})(n - 1)\epsilon\\
		&= 2\epsilon(-1 - (\abs{C} - 1)(m - 1)(n - 1) + (n - 1)(m(\abs{C} - 1) - \abs{C}))\\
		&= 2\epsilon(-1 - (n - 1)) = -2\epsilon n \le 0,
	\end{align*}
	with equality in the last step only when $\epsilon = 0$, i.e. $q_{C, \ell} = p_{C, \ell}$ for all $\ell$. As in the previous case, this means that either the total reward of the experts in $C$ is the same under $\vect{Q}$ as under $\vect{P}$ for every outcome, or it is strictly smaller under $\vect{Q}$ in the case of outcome $\tilde{j}$. This completes the proof.
\end{proof}
\end{document}